\newtheorem{theorem}{Theorem}
\newtheorem{lemma}[theorem]{Lemma}
\newtheorem{corollary}[theorem]{Corollary}
\title{MULTIPATH CHANNELS OF UNBOUNDED CAPACITY}
\name{Tobias Koch ~~~~ Amos Lapidoth}
\address{ETH Zurich, Switzerland\\Sternwartstrasse 7, CH-8092
  Zurich\\Email: \{tkoch, lapidoth\}@isi.ee.ethz.ch}
\begin{document}
%
\maketitle
\begin{abstract}
The capacity of discrete-time, noncoherent, multipath fading channels
is considered. It is shown that if the variances of the path gains
decay faster than exponentially, then capacity is unbounded in the
transmit power.
\end{abstract}
\begin{keywords}
Channel capacity, information rates, multipath channels, fading channels, noncoherent.
\end{keywords}
\section{Introduction}
\label{sec:intro}
This paper studies the capacity of multipath (frequency-selective)
fading channels. A noncoherent channel model is considered where
neither transmitter nor receiver are cognizant of the fading's
realization, but both are aware of its statistic. Our focus is on the
high signal-to-noise ratio (SNR) regime.

For the special case of noncoherent \emph{frequency-flat} fading channels
(where we have only \emph{one} path), it was shown by Lapidoth \&
Moser \cite{lapidothmoser03_3} that if the fading process is of finite
entropy rate, then at high SNR capacity grows double-logarithmically
with the SNR. This is in stark contrast to the logarithmic growth of
the capacity of coherent fading channels (where the realization of the
fading is known to the receiver) \cite{ericson70}. Thus, communicating
over noncoherent flat-fading channels at high SNR is power
inefficient.

Recently, it has been demonstrated that communicating over
noncoherent \emph{multipath} fading channels at high SNR is not
merely power inefficient, but may be even worse: if the delay spread
is large in the sense that the variances of the path gains decay
\emph{exponentially or slower}, then capacity is \emph{bounded} in
the SNR; see \cite[Thm.~1]{kochlapidoth08_1}. For such channels, capacity does
not tend to infinity as the SNR tends to infinity.

In contrast, if the variances of the path gains decay \emph{faster
  than double-exponentially}, then capacity
is \emph{unbounded} in the SNR; see
\cite[Thm.~2]{kochlapidoth08_1}. This condition is certainly
satisfied if the number of paths is finite, i.e., if the channel
output is only influenced by the present and by the $\const{L}$
previous channel inputs. (Here only the variances of the
first $(\const{L}+1)$ path gains are positive, while the other
variances are zero.) It was shown in \cite{kochlapidoth08_2}
that in this case capacity is not only unbounded in the SNR, but
its growth with the $\SNR$ is also independent of the number of
paths and equals the growth of the capacity of
noncoherent frequency-flat fading channels, i.e.,
\begin{equation*}
  \lim_{\SNR\to\infty}\frac{C(\SNR)}{\log\log\SNR} = 1.
\end{equation*}
Thus, for finite $\const{L}$, the capacity pre-loglog is unaffected by
the number of paths $\const{L}$.

The above results demonstrate that whether the
capacity of a multipath channel is unbounded in the SNR
depends critically on the decay rate of the variances of the path
gains. However, \cite[Thm.~1]{kochlapidoth08_1} only accounts for
decay rates that are exponentially or slower, whereas
\cite[Thm.~2]{kochlapidoth08_1} only regards decay rates that are
faster than double-exponentially. Thus,
\cite[Thm.~1]{kochlapidoth08_1} \& \cite[Thm.~2]{kochlapidoth08_1}
fail to characterize the capacity of channels for which the variances
of the path gains decay faster than exponentially but slower than
double-exponentially. In this paper, we bridge this gap by showing
that if the variances of the path gains decay faster than
exponentially, then capacity is unbounded in the SNR.

\subsection{Channel Model}
\label{sub:channel}
Let $\Complex$ and $\Naturals$ denote the set of complex numbers and
the set of positive integers, respectively. We consider a
discrete-time multipath fading channel whose channel output
$Y_k\in\Complex$ at time $k\in\Naturals$ corresponding to the time-1
through time-$k$ channel inputs $x_1,\ldots,x_k\in\Complex$ is given
by
\begin{equation}
  Y_k = \sum_{\ell=0}^{k-1} H_k^{(\ell)} x_{k-\ell}+Z_k, \quad
  k\in\Naturals.
\end{equation}
Here $\{Z_k\}$ models additive noise, and $H_k^{(\ell)}$ denotes the
time-$k$ gain of the $\ell$-th path. We assume that $\{Z_k\}$ is
a sequence of independent and identically distributed (IID),
zero-mean, variance-$\sigma^2$, circularly-symmetric, complex Gaussian
random variables. For each path $\ell\in\Naturals_0$ (where
$\Naturals_0$ denotes the set of nonnegative integers), we assume that
$\bigl\{H_k^{(\ell)},\;k\in\Naturals\bigr\}$ is a zero-mean, complex stationary
process. We denote its variance and its differential entropy rate by
\begin{equation}
  \alpha_{\ell} \triangleq \E{\bigl|H_k^{(\ell)}\bigr|^2}, \qquad \ell\in\Naturals_0
\end{equation}
and
\begin{equation}
  h_{\ell} \triangleq \lim_{n\to\infty} \frac{1}{n}
  h\bigl(H_1^{(\ell)},\ldots,H_n^{(\ell)}\bigr), \quad \ell\in\Naturals_0.
\end{equation}
Without loss of generality we assume that $\alpha_0>0$. We further
assume that
\begin{equation}
  \label{eq:alpha}
  \sum_{\ell=0}^{\infty}\alpha_{\ell} \triangleq \alpha <\infty
\end{equation}
and
\begin{equation}
  \inf_{\ell\in\set{L}} h_{\ell} > -\infty,
\end{equation}
where the set $\set{L}$ is defined as $\set{L}\triangleq
\{\ell\in\Naturals_0:\alpha_{\ell}>0\}$. We finally assume that the
processes
\begin{equation*}
  \bigl\{H_k^{(0)},\;k\in\Naturals\bigr\},\bigl\{H_k^{(1)},\;k\in\Naturals\bigr\},\ldots
\end{equation*}
are independent (``uncorrelated scattering''); that they are jointly
independent of $\{Z_k\}$; and that the joint law of
\begin{equation*}
  \left(\{Z_k\},\bigl\{H_k^{(0)},\;k\in\Naturals\bigr\},\bigl\{H_k^{(1)},\;k\in\Naturals\bigr\},\ldots\right)
\end{equation*}
does not depend on the input sequence $\{x_k\}$. We consider a
noncoherent channel model where neither transmitter nor receiver is
cognizant of the realization of
$\bigl\{H_k^{(\ell)},\;k\in\Naturals\bigr\}$, $\ell\in\Naturals_0$,
but both are aware of their law. We do not assume that the path
gains are Gaussian.

\subsection{Channel Capacity}
\label{sub:capacity}
Let $A_m^n$ denote the sequence $A_m,\ldots,A_n$. We define the capacity as
\begin{equation}
  \label{eq:capacity}
  C(\SNR) \triangleq \varliminf_{n\to\infty}\frac{1}{n} \sup I\bigl(X_1^n;Y_1^n\bigr),
\end{equation}
where the supremum is over all joint distributions on $X_1,\ldots,X_n$
satisfying the power constraint
\begin{equation}
  \frac{1}{n} \sum_{k=1}^n \E{|X_k|^2} \leq \const{P},
\end{equation}
and where SNR is defined as
\begin{equation}
  \SNR \triangleq \frac{\const{P}}{\sigma^2}.
\end{equation}
By Fano's inequality, no rate above $C(\SNR)$ is achievable. (See
\cite{coverthomas91} for a definition of an achievable rate.) We do
not claim that there is a coding theorem associated with
\eqref{eq:capacity}, i.e., that $C(\SNR)$ is achievable. A coding
theorem will hold, for example, if there are only
$(\const{L}+1)$ paths (for some $\const{L}<\infty$), and if the processes
corresponding to these paths
\begin{equation*}
  \bigl\{H_k^{(0)},\;k\in\Naturals\bigr\},\ldots,\bigl\{H_k^{(\const{L})},\;k\in\Naturals\bigr\}
\end{equation*}
are jointly ergodic, see \cite{kim08}.

In \cite{kochlapidoth08_1} a necessary and a sufficient condition for
$C(\SNR)$ to be bounded in $\SNR$ was derived:
\begin{theorem}
  \label{thm:old}
  Consider the above channel model. Then
  \begin{equation}
    \left(\varliminf_{\ell\to\infty}
      \frac{\alpha_{\ell+1}}{\alpha_{\ell}}>0\right)
      \Longrightarrow
      \left(\sup_{\SNR>0} C(\SNR) < \infty\right)\label{eq:old_first}
    \end{equation}
    and
    \begin{equation}
      \left(\lim_{\ell\to\infty}
        \frac{1}{\ell}\log\log\frac{1}{\alpha_{\ell}}=\infty\right) 
    \Longrightarrow \left(\sup_{\SNR>0} C(\SNR) = \infty\right),\label{eq:old_second}
  \end{equation}
  where we define $a/0\triangleq\infty$ for every $a>0$ and $0/0\triangleq
  0$.
\end{theorem}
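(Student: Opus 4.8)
My plan is to establish the two implications by quite different means: \eqref{eq:old_first} is a converse, while \eqref{eq:old_second} is an achievability statement.

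For \eqref{eq:old_first} I would invoke the Lapidoth--Moser duality bound: for \emph{every} auxiliary law $R_{Y_1^n}$ on $\Complex^n$,
\[
  I(X_1^n;Y_1^n)\;\le\;\sup_{x_1^n\in\Complex^n}\,D\bigl(P_{Y_1^n\mid X_1^n=x_1^n}\,\big\|\,R_{Y_1^n}\bigr),
\]
so everything hinges on exhibiting an $R_{Y_1^n}$ for which the right-hand side is at most $n$ times a constant independent of $\sigma^2$. The choice is suggested by the following picture: with $\beta:=\varliminf_{\ell\to\infty}\alpha_{\ell+1}/\alpha_\ell>0$, one has $\alpha_\ell\gtrsim\gamma^{\ell}$ eventually for every $\gamma\in(0,\beta)$, so a large input $x_j$ reappears in each later output through the term $H_k^{(k-j)}x_j$ of variance $\gtrsim\gamma^{\,k-j}|x_j|^2$; hence $|x_j|$ ``reverberates'', dominating the noise over a window of length $\Theta\bigl(\log(|x_j|^2/\sigma^2)\bigr)$, and the receiver, ignorant of the path gains, cannot cancel it. A product reference $R_{Y_1^n}=\prod_k R_{Y_k}$ is then doomed: it charges $\Theta(\log|x_j|^2)$ for each of the $\Theta(\log|x_j|^2)$ reverberating outputs, giving a bound of order $\log^2\SNR$. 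I would therefore take $R_{Y_1^n}=\prod_k R_{Y_k\mid Y_1^{k-1}}$ with each conditional law heavy-tailed---so that $\log|Y_k|$ is nearly uniform over a large range, matching the multiplicative nature of fading---and carrying a \emph{scale parameter that tracks the magnitudes recently observed in $Y_1^{k-1}$}; such an $R$ anticipates the reverberation tail of a past burst and pays for it only once, at $O(1)$ per symbol. The main obstacle is to verify that $D\bigl(P_{Y_1^n\mid X_1^n=x_1^n}\,\|\,R_{Y_1^n}\bigr)\le n\cdot\textsf{const}$ uniformly in $x_1^n$ and $\sigma^2$: one bounds each conditional $Y_k$ by a circularly-symmetric Gaussian of variance $\sigma^2+\sum_\ell\alpha_\ell|x_{k-\ell}|^2$ (the entropy-maximising fading) and must then show that the geometric lower bound $\alpha_\ell\gtrsim\gamma^{\ell}$ is exactly what lets the memory in $R$ absorb the input energy down to a bounded per-symbol cost---this is where the hypothesis of \eqref{eq:old_first} is used and where the estimates get delicate.

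For \eqref{eq:old_second} I would use burst signalling with a slowly growing guard interval. Fix a target $M$. Since $\tfrac1\ell\log\log\tfrac1{\alpha_\ell}\to\infty$, choose $c>2M$ and $\ell_0$ with $\alpha_\ell\le e^{-e^{c\ell}}$ for all $\ell\ge\ell_0$, whence $\sum_{\ell\ge d}\alpha_\ell\le 2e^{-e^{cd}}$ once $d\ge\ell_0$. Partition the length-$n$ block into $m=n/(d+1)$ frames of length $d+1$; in the first slot of each frame transmit one symbol with a flash-type distribution of mean power $\rho:=(d+1)\const{P}$, be silent in the remaining $d$ slots, and let the $m$ frame symbols be independent and identically distributed---this respects the power constraint. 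Let the decoder observe only the first output of each frame. The interference there from earlier bursts has power at most $\rho\sum_{\ell\ge d+1}\alpha_\ell$, which is $\le\sigma^2$ provided $c(d+1)\ge\log\log\bigl(2(d+1)\SNR\bigr)$, hence for $d+1=\bigl\lceil\tfrac2c\log\log(2\SNR)\bigr\rceil$ and $\SNR$ large. Now: (i) since the frame symbols are independent, $I(X_1^n;Y_1^n)\ge\sum_{i=1}^m I(B_i;C_i)$, where $B_i$ and $C_i$ denote the $i$-th frame's symbol and clean output; (ii) each $C_i$ equals the noncoherent flat-fading observation $H^{(0)}B_i$ plus additive noise of variance $\le 2\sigma^2$ that is independent of $(B_i,H^{(0)})$; and (iii) by the Lapidoth--Moser flat-fading analysis---which survives replacing the Gaussian noise by any independent noise of comparable variance and finite differential entropy---such a channel supports a rate $\ge\log\log(\rho/\sigma^2)-O(1)=\log\log\SNR-O(1)$. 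Hence $\tfrac1n I(X_1^n;Y_1^n)\ge\tfrac1{d+1}\bigl(\log\log\SNR-O(1)\bigr)\to c/2$ as $\SNR\to\infty$, so $C(\SNR)\ge c/2-o_{\SNR}(1)>M$ for $\SNR$ large; since $M$ was arbitrary, $\sup_{\SNR>0}C(\SNR)=\infty$. The only delicate point is the interference, which the independence of the bursts turns into an independent additive term at each clean output; and the required growth $d=\Theta(\log\log\SNR)$ is precisely where ``double-exponential'' matters, since under merely faster-than-exponential decay one would have to take $d=\Theta(\log\SNR)$ and the rate $\tfrac1{d+1}\log\log\SNR$ would then vanish---the very gap this paper fills.
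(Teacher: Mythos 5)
First, a framing remark: the paper itself does not prove Theorem~\ref{thm:old} --- both implications are discharged by citation to \cite{kochlapidoth08_1} --- so you are in effect reconstructing the proofs of that reference rather than an argument present in this paper. Your half for \eqref{eq:old_second} is essentially correct and is the standard argument: burst signalling with guard intervals of length $d+1=\Theta(\log\log\SNR)$, log-uniform flash symbols of energy $(d+1)\const{P}$, the observation that the residual interference at each ``clean'' slot is an additive disturbance of variance at most $\sigma^2$ that is independent of the current burst and of $H^{(0)}$ (by the IID frames and uncorrelated scattering), and then the $\log\log$ lower bound of Lemma~\ref{lemma}, which indeed does not require the disturbance to be Gaussian. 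The bookkeeping $\frac{1}{d+1}\bigl(\log\log\SNR-O(1)\bigr)\to c/2>M$ checks out, and your closing remark correctly identifies why this one-symbol-per-frame scheme dies under merely faster-than-exponential decay --- which is exactly why Section~\ref{sec:proof} of this paper uses $\tau=\const{L}$ data symbols per frame with staggered power levels.

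Your half for \eqref{eq:old_first}, however, is a plan rather than a proof, and the gap sits precisely where the theorem's content lies. In the duality bound one has $D\bigl(P_{Y_1^n\mid X_1^n=x_1^n}\big\|R_{Y_1^n}\bigr)=-h\bigl(Y_1^n\bigm|X_1^n=x_1^n\bigr)+\E{-\log r(Y_1^n)}$, so you need (i) an explicit conditional reference law $R_{Y_k\mid Y_1^{k-1}}$, with its scale parameter specified as a concrete function of $Y_1^{k-1}$, such that $\E{-\log r}$ is controlled, and (ii) a \emph{lower} bound on the conditional output entropy that grows like $\log\bigl(\sigma^2+\mathrm{const}\cdot\sum_{\ell}\alpha_\ell|x_{k-\ell}|^2\bigr)$. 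You supply neither: the ``entropy-maximising Gaussian'' you invoke bounds $h(Y_k\mid\cdot)$ from \emph{above}, which is the wrong direction for the term $-h(Y_1^n\mid X_1^n=x_1^n)$, and the lower bound you actually need is exactly where the assumption $\inf_{\ell\in\set{L}}h_\ell>-\infty$ must enter (if the path gains for $\ell\ge1$ were deterministic, the channel would be a known filter plus flat fading and capacity would be unbounded, so the hypothesis on $\varliminf\alpha_{\ell+1}/\alpha_\ell$ alone cannot suffice). Your sketch never uses the finite-entropy-rate assumption, and the quantitative role of $\varliminf_{\ell\to\infty}\alpha_{\ell+1}/\alpha_\ell=\beta>0$ --- in the cited proof it is what makes a telescoping ratio of successive ``interference plus noise'' levels bounded per symbol --- is asserted but not exhibited. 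As written, the converse half is not established.
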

\begin{proof}
  For the first condition \eqref{eq:old_first} see
  \cite[Thm.~1]{kochlapidoth08_1}, and for the second condition \eqref{eq:old_second} see
  \cite[Thm.~2]{kochlapidoth08_1}.
\end{proof}
For example, when $\alpha_{\ell}=e^{-\ell}$, then capacity is
bounded, and when $\alpha_{\ell}=\exp\bigl(-\exp(\ell^{\kappa})\bigr)$ for
some $\kappa>1$, then capacity is unbounded.
Roughly speaking, we can say that
when $\{\alpha_{\ell}\}$ decays exponentially or slower, then
$C(\SNR)$ is bounded in $\SNR$, and when $\{\alpha_{\ell}\}$ decays
faster than double-exponentially, then $C(\SNR)$ is unbounded
in $\SNR$.

\subsection{Main Result}
\label{sub:result}
Our main result is an improved achievability result. We derive a
weaker condition that satisfies to guarantee that capacity is
unbounded in the $\SNR$.

\begin{theorem}
  \label{thm:main}
  Consider the above channel model. Then
  \begin{equation}
    \left(\lim_{\ell\to\infty}\frac{1}{\ell}\log\frac{1}{\alpha_{\ell}}
    =  \infty\right)
    \Longrightarrow \left(\sup_{\SNR>0} C(\SNR)=\infty\right),
  \end{equation}
  where we define $1/0\triangleq\infty$.
\end{theorem}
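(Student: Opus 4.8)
The plan is to show that for every $\const{R}>0$ there is an $\SNR$ and an input distribution under which $\tfrac1n I(X_1^n;Y_1^n)>\const{R}$ for all large $n$; this is exactly the assertion $\sup_{\SNR>0}C(\SNR)=\infty$. First I would restrict attention to inputs supported on the equally spaced times $1,T+1,2T+1,\dots$, for an integer $T$ to be chosen later, and write $\tilde X_j\triangleq X_{(j-1)T+1}$, $\tilde Y_j\triangleq Y_{(j-1)T+1}$. The channel seen by the active symbols,
\begin{equation*}
  \tilde Y_j = H^{(0)}_{(j-1)T+1}\tilde X_j + \sum_{i\ge1}H^{(iT)}_{(j-1)T+1}\tilde X_{j-i} + Z_{(j-1)T+1},
\end{equation*}
is a flat-fading channel whose fading $\{H^{(0)}_{(j-1)T+1}\}_j$ is a subsampling of a stationary process of finite entropy rate (its entropy rate is at most $h\bigl(H^{(0)}_1\bigr)$, which is finite by the finite-variance and finite-entropy-rate assumptions), corrupted by the noise $\{Z_{(j-1)T+1}\}$ and by self-interference whose conditional second moment given the inputs is at most $\bigl(\sum_{i\ge1}\alpha_{iT}\bigr)(T\const{P})$. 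Since only a $1/T$ fraction of the slots is used, each active symbol may carry power $T\const{P}$.

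Treating the self-interference as extra noise of the above variance---an input-dependent quantity that I would handle by a genie argument revealing $\tilde X_1^{j-1}$, or by bounding $h\bigl(\tilde Y_1^n\mid\tilde X_1^n\bigr)$ directly through second moments---and invoking the Lapidoth--Moser flat-fading achievability bound \cite{lapidothmoser03_3} (flash signalling with a geometrically spaced magnitude), one obtains a per-active-symbol rate of at least $\log\log\SNR_{\mathrm{eff}}-K_0$, where $\SNR_{\mathrm{eff}}=\alpha_0 T\const{P}\big/\bigl(\sigma^2+(T\const{P})\sum_{i\ge1}\alpha_{iT}\bigr)$ and $K_0$ is a constant bounded uniformly in $T$; dividing by $T$ yields a rate of at least $\tfrac1T\bigl(\log\log\SNR_{\mathrm{eff}}-K_0\bigr)$ for the original channel. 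Because $\sum_\ell\alpha_\ell<\infty$, the ratio $\sum_{i\ge1}\alpha_{iT}/\alpha_0$ vanishes as $T\to\infty$, so $\SNR_{\mathrm{eff}}$ can be made large.

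The crux is that this bound, used at face value, still loses the factor $T$ and only reproduces the double-exponential criterion of \cite[Thm.~2]{kochlapidoth08_1}: one must extract more than $\log\log$-many bits per active symbol. To do so I would also decode from the silent slots. The output $Y_{(j-1)T+r+1}$, $1\le r\le T-1$, is a faint echo of the $j$-th active symbol through the $r$-th path (plus lower-order terms), and that path's gain process is independent of all the others; since the number of echoes rising above the noise floor is of order $T$ under the choice of $T$ made below, these furnish of order $T$ essentially independent looks at $|\tilde X_j|$, which lets the receiver resolve a magnitude constellation of size growing with $T$ rather than merely with $\log\SNR$. Equivalently, one may keep the self-interference and analyse the induced multipath channel with tap variances $\alpha_0,\alpha_T,\alpha_{2T},\dots$ in the spirit of \cite{kochlapidoth08_2}. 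I would then take $T=T(\SNR)$ to be the smallest integer with $(T\const{P})\sum_{i\ge1}\alpha_{iT}\le\sigma^2$; by the hypothesis $\tfrac1\ell\log\tfrac1{\alpha_\ell}\to\infty$ this $T$ grows only like $o(\log\SNR)$, while $\SNR_{\mathrm{eff}}$ and the magnitude dynamic range stay of order $\SNR$, and optimizing the constellation then pushes the rate above $\const{R}$ once $\SNR$ is large.

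The main obstacle I anticipate is exactly this last step: making the gain from the echo observations (equivalently, from the finite-tap structure) quantitatively explicit and showing that, together with the choice $T=o(\log\SNR)$, the weak decay condition $\tfrac1\ell\log\tfrac1{\alpha_\ell}\to\infty$ already forces the rate to be unbounded---whereas the argument without this refinement needs the stronger $\tfrac1\ell\log\log\tfrac1{\alpha_\ell}\to\infty$. A secondary, more routine difficulty is the bookkeeping around the input-dependent residual interference: verifying that the flat-fading lower bound survives once the self-interference and the cross-talk among the echoes are absorbed into an additional, conditionally second-moment-bounded, noise term.
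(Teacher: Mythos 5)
Your first stage is sound, and you correctly diagnose its weakness: with one active slot in every $T$, the choice of $T$ that pushes the residual interference $(T\const{P})\sum_{i\ge1}\alpha_{iT}$ below the noise floor forces $T$ to grow like $\log\SNR/\log(1/\varrho)$, and a per-active-symbol payoff of order $\log\log\SNR$ divided by such a $T$ vanishes unless the decay is double-exponential. The gap is in your proposed remedy, which you yourself flag as the main obstacle and leave unverified. The echoes $Y_{(j-1)T+r+1}$ are \emph{noncoherent} looks at $\tilde X_j$ through unknown, mutually independent path gains $H^{(r)}$. Each such look determines $\log|\tilde X_j|^2$ only up to the order-one uncertainty in $\log|H^{(r)}|^2$, so $T$ independent looks shrink the resolvable magnitude cell by roughly a factor $\sqrt{T}$ and add at most $O(\log T)$ nats per active symbol; they do not produce a constellation of $e^{\Theta(T)}$ distinguishable levels. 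This is consistent with what is known for the induced finite-tap channel with variances $\alpha_0,\alpha_T,\alpha_{2T},\dots$: by \cite{kochlapidoth08_2} its pre-loglog is $1$ irrespective of the number of taps, i.e., the extra taps buy at most additive, not multiplicative, gains. Hence the per-slot rate remains of order $(\log\log\SNR+\log T)/T\to0$, and the scheme cannot establish unbounded capacity under the hypothesis $\frac1\ell\log\frac1{\alpha_\ell}\to\infty$.

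The paper closes the gap by a different device: it keeps the fraction of active slots equal to $1/2$ rather than $1/T$. Each block consists of a guard interval of $\const{L}$ zeros (with $\const{L}$ of order $\log\SNR/\log(1/\varrho)$, exactly as in your choice of $T$) followed by $\tau=\const{L}$ \emph{simultaneously active} symbols whose magnitudes form a staircase: $\log|\tilde X_{b\tau+\nu}|^2$ is uniform on $[\frac{\nu-1}{\tau}\log\const{P},\frac{\nu}{\tau}\log\const{P}]$. Since the $\nu$-th symbol is surely at least as large as every earlier symbol in its block, the intra-block self-interference it suffers is at most $\alpha|\tilde X_{b\tau+\nu}|^2$, while inter-block interference is below $\sigma^2$ by the choice of $\const{L}$; the lower bound of \cite[Lemma 4]{lapidoth05} then gives each active symbol at least $\log\log\const{P}^{1/\tau}+\Upsilon=\log\log(1/\varrho)+O(1)$ nats, and the rate is at least half of that, which is unbounded as $\varrho\to0$. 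So the missing idea is not extra observations of each symbol, but many active symbols per guard interval, separated in magnitude rather than in time.
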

\begin{proof}
  See Section~\ref{sec:proof}.
\end{proof}
By noting that
\begin{equation*}
  \left(\lim_{\ell\to\infty}\frac{\alpha_{\ell+1}}{\alpha_{\ell}}=0\right)
  \Longrightarrow \left(\lim_{\ell\to\infty}\frac{1}{\ell}\log\frac{1}{\alpha_{\ell}}=\infty\right)
\end{equation*}
we obtain from Theorems \ref{thm:old} \& \ref{thm:main} the immediate corollary:
\begin{corollary}
  \label{cor:main}
  Consider the above channel model. Then
  \begin{IEEEeqnarray}{rrcl}
    i) & \left(\varliminf_{\ell\to\infty}
      \frac{\alpha_{\ell+1}}{\alpha_{\ell}}>0\right)&
    \;\Longrightarrow\;  & \left(\sup_{\SNR>0} C(\SNR) <
      \infty\right)\\
    ii) & \;
    \left(\lim_{\ell\to\infty}\frac{\alpha_{\ell+1}}{\alpha_{\ell}}=0\right) &
    \;\Longrightarrow \; &\left(\sup_{\SNR>0} C(\SNR)=\infty\right),\IEEEeqnarraynumspace
  \end{IEEEeqnarray}
  where we define $a/0\triangleq\infty$ for every $a>0$ and $0/0\triangleq
  0$.
\end{corollary}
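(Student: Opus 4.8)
The plan is to reduce both parts to the theorems already in hand, so that the only genuine work is an elementary translation between the two decay conditions. Part~i) requires nothing new: its hypothesis $\varliminf_{\ell\to\infty}\alpha_{\ell+1}/\alpha_{\ell}>0$ and its conclusion $\sup_{\SNR>0}C(\SNR)<\infty$ are verbatim the implication~\eqref{eq:old_first} of Theorem~\ref{thm:old}, so I would simply invoke that theorem.

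For Part~ii) I would deduce the hypothesis of Theorem~\ref{thm:main} from the hypothesis $\lim_{\ell\to\infty}\alpha_{\ell+1}/\alpha_{\ell}=0$, and then invoke Theorem~\ref{thm:main} to conclude $\sup_{\SNR>0}C(\SNR)=\infty$. Concretely, I must establish the intermediate implication
\begin{equation*}
  \left(\lim_{\ell\to\infty}\frac{\alpha_{\ell+1}}{\alpha_{\ell}}=0\right)
  \Longrightarrow
  \left(\lim_{\ell\to\infty}\frac{1}{\ell}\log\frac{1}{\alpha_{\ell}}=\infty\right)
\end{equation*}
asserted in the discussion preceding the corollary. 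The idea is to pass to logarithms: writing $c_{\ell}\triangleq\log(1/\alpha_{\ell})$, the ratio condition becomes a statement about forward differences, since $c_{\ell+1}-c_{\ell}=-\log(\alpha_{\ell+1}/\alpha_{\ell})\to+\infty$. I would then run a telescoping (Stolz--Ces\`aro) argument: fix $M>0$, choose $\ell_0$ with $c_{\ell+1}-c_{\ell}\ge 2M$ for all $\ell\ge\ell_0$, sum the differences to obtain $c_{\ell}\ge c_{\ell_0}+2M(\ell-\ell_0)$, and divide by $\ell$; letting $\ell\to\infty$ gives $\varliminf_{\ell\to\infty} c_{\ell}/\ell\ge 2M>M$, and since $M$ is arbitrary, $c_{\ell}/\ell\to\infty$, which is exactly the condition required by Theorem~\ref{thm:main}.

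The only point requiring care---and the closest thing to an obstacle in an otherwise immediate corollary---is the bookkeeping forced by the conventions $a/0\triangleq\infty$ $(a>0)$ and $0/0\triangleq0$, which makes $c_{\ell}$ a priori ill-defined when $\alpha_{\ell}=0$. I would dispatch this first: if $\alpha_{\ell}=0$ while $\alpha_{\ell+1}>0$ for infinitely many $\ell$, then $\alpha_{\ell+1}/\alpha_{\ell}=\infty$ infinitely often, contradicting the hypothesis; hence for all large $\ell$ either $\alpha_{\ell}>0$ throughout (the regime in which the logarithmic argument above applies verbatim) or $\alpha_{\ell}=0$ throughout, in which case $\tfrac{1}{\ell}\log(1/\alpha_{\ell})=\infty$ for all large $\ell$ by the convention $1/0\triangleq\infty$ and the target limit holds trivially. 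With this dichotomy in place, the telescoping estimate completes Part~ii), and the corollary follows.
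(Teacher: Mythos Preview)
Your proposal is correct and follows exactly the route the paper takes: Part~i) is simply \eqref{eq:old_first} from Theorem~\ref{thm:old}, and Part~ii) follows from Theorem~\ref{thm:main} once one checks that $\lim_{\ell\to\infty}\alpha_{\ell+1}/\alpha_{\ell}=0$ implies $\lim_{\ell\to\infty}\tfrac{1}{\ell}\log(1/\alpha_{\ell})=\infty$. The paper merely asserts this implication without argument, whereas you supply the telescoping (Stolz--Ces\`aro) proof and carefully dispose of the zero--variance edge cases; your added detail is correct and fills a gap the paper leaves implicit.
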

For example, when $\alpha_{\ell}=\exp(-\ell^{\kappa})$ for some
$\kappa>1$, then capacity is unbounded. 

Theorem~\ref{thm:main} and Corollary~\ref{cor:main} demonstrate
that when $\{\alpha_{\ell}\}$ decays faster than exponentially,
then $C(\SNR)$ is unbounded in $\SNR$, thus bridging the gap between
\eqref{eq:old_first} and \eqref{eq:old_second}.

\section{Proof of Theorem 2}
\label{sec:proof}
In order to prove Theorem~\ref{thm:main}, we shall derive in
Section~\ref{sub:lowerbound} a lower bound on capacity and then show in
Section~\ref{sub:unbounded} that this bound can be made arbitrarily
large, provided that
\begin{equation*}
  \lim_{\ell\to\infty}\frac{1}{\ell} \log\frac{1}{\alpha_{\ell}} = \infty.
\end{equation*}

\subsection{Capacity Lower Bound}
\label{sub:lowerbound}
To derive a lower bound on capacity, we evaluate
$\frac{1}{n}I(X_1^n;Y_1^n)$ for the following distribution on the
inputs $\{X_k\}$.

Let $\const{L}(\const{P})$ be such that
\begin{equation}
  \label{eq:L}
  \sum_{\ell=\const{L}(\const{P})+1}^\infty \alpha_{\ell}\cdot\const{P} \leq \sigma^2.
\end{equation}
To shorten notation, we shall write in the following $\const{L}$ instead
of $\const{L}(\const{P})$. Let $\tau\in\Naturals$ be some positive
integer that possibly depends on $\const{L}$, and let
$\vect{X}_b=(X_{b(\const{L}+\tau)+1},\ldots,X_{(b+1)(\const{L}+\tau)})$.
We choose $\{\vect{X}_b\}$ to be IID with
\begin{equation*}
  \vect{X}_b = \bigl(\underbrace{0,\ldots,0}_{\const{L}},\tilde{X}_{b\tau+1},\ldots,\tilde{X}_{(b+1)\tau}\bigr),
\end{equation*}
where $\tilde{X}_{b\tau+1},\ldots,\tilde{X}_{(b+1)\tau}$ is a sequence
of independent, zero-mean, circularly-symmetric, complex random variables with
$\log|\tilde{X}_{b\tau+\nu}|^2$ being uniformly distributed over the
interval $\bigl[\log \const{P}^{(\nu-1)/\tau},\log
\const{P}^{\nu/\tau}\bigr]$, i.e., for each $\nu=1,\ldots,\tau$
\begin{equation*}
  \log |\tilde{X}_{b\tau+\nu}|^2 \sim
  \Uniform{\bigl[\log\const{P}^{(\nu-1)/\tau},\log
  \const{P}^{\nu/\tau}\bigr]}.
\end{equation*}
(Here and throughout this proof we assume that $\const{P}>1$.)

Let $\kappa\triangleq \lfloor \frac{n}{\const{L}+\tau}\rfloor$ (where
$\lfloor a \rfloor$ denotes the largest integer that is less
than or equal to $a$), and
let $\vect{Y}_b$ denote the vector
$(Y_{b(L+\tau)+1},\ldots,Y_{(b+1)(\const{L}+\tau)})$. By the chain
rule for mutual information \cite[Thm.~2.5.2]{coverthomas91} we have
\begin{IEEEeqnarray}{lCl}
  I\bigl(X_1^n;Y_1^n\bigr) & \geq &
  I\bigl(\vect{X}_0^{\kappa-1};\vect{Y}_0^{\kappa-1}\bigr)\nonumber\\
  & = & \sum_{b=0}^{\kappa-1}
  I\bigl(\vect{X}_b;\vect{Y}_0^{\kappa-1}\bigm|\vect{X}_0^{b-1}\bigr)\nonumber\\
  & \geq & \sum_{b=0}^{\kappa-1} I(\vect{X}_b;\vect{Y}_b),\label{eq:proof1}
\end{IEEEeqnarray}
where the first inequality follows by restricting the number of
observables; and where the last inequality follows by restricting the
number of observables and by noting that $\{\vect{X}_b\}$ is IID.

We continue by lower bounding each summand on the right-hand side
(RHS) of \eqref{eq:proof1}. We use again the chain rule and that
reducing observations cannot increase mutual information to obtain
\begin{IEEEeqnarray}{lCl}
  I(\vect{X}_b;\vect{Y}_b) & = & \sum_{\nu=1}^{\tau}
  I\bigl(\tilde{X}_{b\tau+\nu};\vect{Y}_b\bigm|\tilde{X}_{b\tau+1}^{b\tau+\nu-1}\bigr)\nonumber\\
  & \geq & \sum_{\nu=1}^{\tau} I\bigl(\tilde{X}_{b\tau+\nu};Y_{b(\const{L}+\tau)+\const{L}+\nu}\bigm|\tilde{X}_{b\tau+1}^{b\tau+\nu-1}\bigr)\nonumber\\
  & \geq & \sum_{\nu=1}^{\tau} I\bigl(\tilde{X}_{b\tau+\nu};Y_{b(\const{L}+\tau)+\const{L}+\nu}\bigr),\label{eq:proof1b}
\end{IEEEeqnarray}
where we have additionally used in the last inequality that
$\tilde{X}_{b\tau+1},\ldots,\tilde{X}_{(b+1)\tau}$ are independent.

Defining
\begin{IEEEeqnarray}{lCl}
  W_{b\tau+\nu} & \triangleq &
  \sum_{\ell=1}^{b(\const{L}+\tau)+\const{L}+\nu-1}
  H_{b(\const{L}+\tau)+\const{L}+\nu}^{(\ell)}
  X_{b(\const{L}+\tau)+\const{L}+\nu-\ell}\nonumber\\
  & & \qquad \qquad \qquad {} + Z_{b(\const{L}+\tau)+\const{L}+\nu}
\end{IEEEeqnarray}
each summand on the RHS of \eqref{eq:proof1b} can be written as
\begin{IEEEeqnarray}{lCl}
  \IEEEeqnarraymulticol{3}{l}{I\bigl(\tilde{X}_{b\tau+\nu};Y_{b(\const{L}+\tau)+\const{L}+\nu}\bigr)}\nonumber\\
  \qquad \quad & = & I\bigl(\tilde{X}_{b\tau+\nu};H_{b(\const{L}+\tau)+\const{L}+\nu}^{(0)}\tilde{X}_{b\tau+\nu}+W_{b\tau+\nu}\bigr).\IEEEeqnarraynumspace\label{eq:proof2}
\end{IEEEeqnarray}
A lower bound on \eqref{eq:proof2} follows from the following lemma.
\begin{lemma}
  \label{lemma}
  Let the random variables $X$, $H$, and $W$ have finite second
  moments. Assume that both $X$ and $H$ are of finite differential
  entropy. Finally, assume that $X$ is independent of $H$; that
  $X$ is independent of $W$; and that $X \markov H \markov W$ forms a
  Markov chain. Then
  \begin{IEEEeqnarray}{lCl}
    I(X;HX+W) & \geq & h(X)-\E{\log|X|^2}+\E{\log|H|^2}\nonumber\\
    & & {} - \E{\log\biggl(\pi e\biggl(\sigma_H+\frac{\sigma_W}{|X|}\biggr)^2\biggr)},\IEEEeqnarraynumspace
  \end{IEEEeqnarray}
  where $\sigma^2_H\geq 0$ and $\sigma^2_H>0$ denote the variances of\/
  $W$ and $H$. (Note that the assumptions that $X$ and $H$ have finite
  second moments and are of finite differential entropy guarantee
  that $\E{\log|X|^2}$ and $\E{\log|H|^2}$ are finite, see \textnormal{\cite[Lemma
  6.7e]{lapidothmoser03_3}}.)
\end{lemma}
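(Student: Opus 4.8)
The plan is to write the mutual information as a difference of differential entropies, $I(X;HX+W)=h(Y)-h(Y\mid X)$ with $Y\triangleq HX+W$, and to bound the two terms separately. The hypotheses that $X$ and $H$ have finite second moments and finite differential entropy ensure, via \cite[Lemma~6.7e]{lapidothmoser03_3}, that $\E{\log|X|^2}$ and $\E{\log|H|^2}$ are finite; together with the finiteness of the second moments of $H$, $X$, and $W$ this lets me check that $h(Y)$ and $h(Y\mid X)$ are well defined, so that the identity $I=h(Y)-h(Y\mid X)$ is legitimate. I would also first record that the three stated independence assumptions imply that $X$ is independent of the \emph{pair} $(H,W)$: since $X\markov H\markov W$ gives $p(x,w\mid h)=p(x\mid h)\,p(w\mid h)$ and $X$ independent of $H$ gives $p(x\mid h)=p(x)$, it follows that $p(x,h,w)=p(x)\,p(h,w)$.

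For the term $h(Y)$ I would condition on $(H,W)$. Conditioning cannot increase differential entropy, and conditioned on $(H,W)=(h,w)$ the random variable $Y$ equals $hX+w$ with $X$ distributed according to its unconditional law; hence, by the translation invariance of differential entropy and the scaling identity $h(aX)=h(X)+\log|a|^2$ for $a\in\Complex$ (valid because $H\neq0$ almost surely, as $\E{\log|H|^2}>-\infty$),
\begin{equation*}
  h(Y)\;\geq\;h(Y\mid H,W)\;=\;h(X)+\E{\log|H|^2}.
\end{equation*}

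For the term $h(Y\mid X)$ I would condition on $X$. Conditioned on $X=x$, $Y$ equals $xH+W$; bounding its differential entropy by that of a circularly-symmetric complex Gaussian of the same variance, and then invoking Minkowski's inequality on the centered variables $x(H-\E{H})$ and $W-\E{W}$ to bound the variance of $xH+W$ by $(|x|\sigma_H+\sigma_W)^2$, gives $h(Y\mid X=x)\leq\log|x|^2+\log\!\bigl(\pi e(\sigma_H+\sigma_W/|x|)^2\bigr)$. It is here that the looser Minkowski bound is needed, rather than the (generally false) equality of this variance with $|x|^2\sigma_H^2+\sigma_W^2$: because only $X\markov H\markov W$ is assumed, $H$ and $W$ may be dependent. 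Averaging over $X$ yields
\begin{equation*}
  h(Y\mid X)\;\leq\;\E{\log|X|^2}+\E{\log\!\biggl(\pi e\biggl(\sigma_H+\frac{\sigma_W}{|X|}\biggr)^{2}\biggr)},
\end{equation*}
and subtracting this from the lower bound on $h(Y)$ gives exactly the claimed inequality.

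The computation is short; the step that calls for the most care — and which I would treat as the real obstacle — is the well-definedness bookkeeping: checking that $h(Y)$ is finite (it lies between $h(X)+\E{\log|H|^2}>-\infty$ and $\log\bigl(\pi e\,\E{|Y|^2}\bigr)<\infty$), that $h(Y\mid X)$ is a legitimate quantity, and that the expectation $\E{\log(\sigma_H+\sigma_W/|X|)}$ appearing on the right-hand side is finite. All of this rests on the finite-second-moment and finite-differential-entropy hypotheses and on the finiteness of $\E{\log|X|^2}$ and $\E{\log|H|^2}$ from \cite[Lemma~6.7e]{lapidothmoser03_3}. The resulting bound is then substituted into \eqref{eq:proof2}.
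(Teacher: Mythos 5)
Your proof is correct. Note, though, that the paper does not actually prove this lemma---it simply cites \cite[Lemma 4]{lapidoth05}---so what you have written is a self-contained derivation of outsourced material; as it happens, it is essentially the argument used in that reference. The three key steps are all sound: (i) the hypotheses do imply that $X$ is independent of the pair $(H,W)$, since the Markov property gives $p(x,w\mid h)=p(x\mid h)p(w\mid h)$ and independence of $X$ and $H$ collapses $p(x\mid h)$ to $p(x)$ (which, incidentally, makes the separate assumption that $X$ is independent of $W$ redundant); (ii) the lower bound $h(Y)\geq h(Y\mid H,W)=h(X)+\E{\log|H|^2}$ is valid because $\E{\log|H|^2}>-\infty$ forces $H\neq 0$ almost surely, so the complex scaling law $h(hX)=h(X)+\log|h|^2$ applies; and (iii) your use of the triangle inequality in $L^2$ to get $\mathrm{Var}(xH+W)\leq(|x|\sigma_H+\sigma_W)^2$ before invoking the Gaussian maximum-entropy bound is exactly the right move, since the Markov chain $X\markov H\markov W$ permits $H$ and $W$ to be dependent and the naive variance decomposition would be unjustified. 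Your attention to the well-definedness bookkeeping (finiteness of $\E{\log|X|^2}$ and $\E{\log|H|^2}$ via \cite[Lemma 6.7e]{lapidothmoser03_3}, and hence of every term in the bound) is also appropriate; the only cosmetic remark is that the lemma statement itself contains a typo ($\sigma_H^2\geq 0$ should read $\sigma_W^2\geq 0$ for the variance of $W$), which your proof implicitly corrects.
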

\begin{proof}
  See \cite[Lemma 4]{lapidoth05}.
\end{proof}
It can be easily verified that for the channel model given in
Section~\ref{sub:channel} and for the above coding scheme the lemma's
conditions are satisfied. We therefore obtain from Lemma~\ref{lemma}
\begin{IEEEeqnarray}{lCl}
  \IEEEeqnarraymulticol{3}{l}{I\bigl(\tilde{X}_{b\tau+\nu};H_{b(\const{L}+\tau)+\const{L}+\nu}^{(0)}\tilde{X}_{b\tau+\nu}+W_{b\tau+\nu}\bigr)}\nonumber\\
  & \geq & h\bigl(\tilde{X}_{b\tau+\nu}\bigr)
  -\E{\log|\tilde{X}_{b\tau+\nu}|^2} +
  \E{\log\bigl|H_{b(\const{L}+\tau)+\const{L}+\nu}^{(0)}\bigr|^2}\nonumber\\
  & & {} - \E{\log\biggl(\pi e\biggl(\sqrt{\alpha_0}+\frac{\sqrt{\E{|W_{b\tau+\nu}|^2}}}{|\tilde{X}_{b\tau+\nu}|}\biggr)^2\biggr)}.\label{eq:proof3}
\end{IEEEeqnarray}
Using that the differential entropy of a circularly-symmetric random
variable is given by (see \cite[Eqs.~(320) \&
(316)]{lapidothmoser03_3})
\begin{equation}
  h\bigl(\tilde{X}_{b\tau+\nu}\bigr) =
  \E{\log|\tilde{X}_{b\tau+\nu}|^2}+h\bigl(\log|\tilde{X}_{b\tau+\nu}|^2\bigr)
  + \log\pi,
\end{equation}
and evaluating $h(\log|\tilde{X}_{b\tau+\nu}|^2)$ for our choice of
$\tilde{X}_{b\tau+\nu}$, yields for the first two terms on the RHS of
\eqref{eq:proof3}
\begin{equation}
  \label{eq:proof3b}
  h\bigl(\tilde{X}_{b\tau+\nu}\bigr)
  -\E{\log|\tilde{X}_{b\tau+\nu}|^2} = \log\log\const{P}^{1/\tau} +\log\pi.
\end{equation}
We next upper bound
\begin{IEEEeqnarray}{l}
  \frac{\E{|W_{b\tau+\nu}|^2}}{|\tilde{X}_{b\tau+\nu}|^2} =
  \sum_{\ell=1}^{\const{L}} \alpha_{\ell}
  \frac{\E{|X_{b(\const{L}+\tau)+\const{L}+\nu-\ell}|^2}}{|\tilde{X}_{b\tau+\nu}|^2}\nonumber\\
  {} + \!\sum_{\ell=\const{L}+1}^{b(\const{L}+\tau)+\const{L}+\nu-1}\! \alpha_{\ell}
  \frac{\E{|X_{b(\const{L}+\tau)+\const{L}+\nu-\ell}|^2}}{|\tilde{X}_{b\tau+\nu}|^2}
  +\frac{\sigma^2}{|\tilde{X}_{b\tau+\nu}|^2}.\IEEEeqnarraynumspace\label{eq:proof4}
\end{IEEEeqnarray}
To this end we note that for our choice of $\{X_k\}$ and by the
assumption that $\const{P}>1$, we have
\begin{equation}
  \E{|X_{\ell}|^2} \leq
  \const{P},\quad \ell\in\Naturals,
\end{equation}
\begin{equation}
\E{|X_{b(\const{L}+\tau)+\const{L}+\nu-\ell}|^2}\leq
\const{P}^{(\nu-\ell)/\tau}, \quad \ell=1,\ldots,\const{L},
\end{equation}
 and
\begin{equation}
  \label{eq:proof4a}
  |\tilde{X}_{b\tau+\nu}|^2\geq\const{P}^{(\nu-1)/\tau}\geq 1,
\end{equation}
from which we obtain
\begin{equation}
  \label{eq:proof4b}
  \frac{\E{|X_{b(\const{L}+\tau)+\const{L}+\nu-\ell}|^2}}{|\tilde{X}_{b\tau+\nu}|^2}
  \leq
  \frac{\const{P}^{(\nu-\ell)/\tau}}{\const{P}^{(\nu-1)/\tau}}\leq 1,
  \quad \ell=1,\ldots,\const{L}
\end{equation}
and
\begin{equation}
  \label{eq:proof4c}
  \frac{\E{|X_{b(\const{L}+\tau)+\const{L}+\nu-\ell}|^2}}{|\tilde{X}_{b\tau+\nu}|^2}
  \leq \const{P}, \quad \const{L}<\ell<b(\const{L}+\tau)+\const{L}+\nu.
\end{equation}
Applying \eqref{eq:proof4a}--\eqref{eq:proof4c} to \eqref{eq:proof4}
yields
\begin{IEEEeqnarray}{lCl}
  \frac{\E{|W_{b\tau+\nu}|^2}}{|\tilde{X}_{b\tau+\nu}|^2}
  & \leq & \sum_{\ell=1}^{\const{L}}\alpha_{\ell} +
  \sum_{\ell=\const{L}+1}^{b(\const{L}+\tau)+\const{L}+\nu-1}
  \alpha_{\ell} \cdot \const{P}+\sigma^2\nonumber\\
  & \leq & \alpha +
  \sum_{\ell=\const{L}+1}^{\infty}
  \alpha_{\ell} \cdot \const{P} + \sigma^2 \nonumber\\
  & \leq & \alpha + 2\sigma^2,\label{eq:proof5}
\end{IEEEeqnarray}
with $\alpha$ being defined in \eqref{eq:alpha}. Here
the second inequality follows because $\alpha_{\ell}$,
$\ell\in\Naturals_0$ and $\const{P}$
are nonnegative, and the last inequality follows from
\eqref{eq:L}.

By combining \eqref{eq:proof3} with \eqref{eq:proof3b} \&
\eqref{eq:proof5}, and by noting that by the stationarity of
$\bigl\{H_k^{(0)},\;k\in\Naturals\bigr\}$
\begin{equation*}
  \E{\log\bigl|H_{b(\const{L}+\tau)+\const{L}+\nu}^{(0)}\bigr|^2} =
  \E{\log\bigl|H_{1}^{(0)}\bigr|^2},
\end{equation*}
we obtain the lower bound
\begin{IEEEeqnarray}{lCl}
  \IEEEeqnarraymulticol{3}{l}{I\bigl(\tilde{X}_{b\tau+\nu};H_{b(\const{L}+\tau)+\const{L}+\nu}^{(0)}\tilde{X}_{b\tau+\nu}+W_{b\tau+\nu}\bigr)}\nonumber\\
  \qquad & \geq & \log\log\const{P}^{1/\tau} +
  \E{\log\bigl|H_{1}^{(0)}\bigr|^2} - 1\nonumber\\
  & & {} - 2\log\bigl(\sqrt{\alpha_0}+\sqrt{\alpha + 2\sigma^2}\bigr).\IEEEeqnarraynumspace\label{eq:proof6}
\end{IEEEeqnarray}
Note that the RHS of \eqref{eq:proof6} neither depends on $\nu$ nor on
$b$. We therefore have from \eqref{eq:proof6}, \eqref{eq:proof1b}, and
\eqref{eq:proof1}
\begin{equation}
  I\bigl(X_1^n;Y_1^n\bigr) \geq \kappa\tau \log\log\const{P}^{1/\tau}
  + \kappa\tau \Upsilon,\label{eq:proof7}
\end{equation}
where we define $\Upsilon$ as
\begin{equation}
  \Upsilon \triangleq \E{\log\bigl|H_{1}^{(0)}\bigr|^2} - 1 - 2\log\bigl(\sqrt{\alpha_0}+\sqrt{\alpha + 2\sigma^2}\bigr).
\end{equation}
Dividing the RHS of \eqref{eq:proof7} by $n$, and computing the limit
as $n$ tends to infinity, yields the lower bound on capacity
\begin{equation}
  \label{eq:lowerbound}
  C(\SNR) \geq \frac{\tau}{\const{L}+\tau}
  \log\log\const{P}^{1/\tau}+\frac{\tau}{\const{L}+\tau}\Upsilon,
  \quad \const{P}>1,
\end{equation}
where we have used that $\lim_{n\to\infty}\kappa/n=1/(\const{L}+\tau)$.

\subsection{Unbounded Capacity}
\label{sub:unbounded}
We next show that
\begin{equation}
  \label{eq:exponential}
  \lim_{\ell\to\infty}\frac{1}{\ell}\log\frac{1}{\alpha_{\ell}} = \infty
\end{equation}
implies that the RHS of \eqref{eq:lowerbound} can be made arbitrarily
large. To this end we note that by \eqref{eq:exponential} we can
find for every $0<\varrho<1$ an $\ell_0\in\Naturals$ such that
\begin{equation}
  \alpha_{\ell} < \varrho^{\ell}, \quad \ell > \ell_0.
\end{equation}
We therefore have
\begin{equation}
  \label{eq:sum}
  \sum_{\ell=\ell^{\prime}+1}^{\infty} \alpha_{\ell} <
  \sum_{\ell=\ell^{\prime}+1}^{\infty} \varrho^{\ell} =
  \varrho^{\ell^{\prime}}\frac{\varrho}{1-\varrho}, \quad
  \ell^{\prime}\geq\ell_0.
\end{equation}
We choose $\const{L}$ so that it satisfies
\begin{equation}
  \label{eq:otherL}
  \varrho^{\const{L}}\frac{\varrho}{1-\varrho}\const{P} \leq \sigma^2,
\end{equation}
i.e., we choose
\begin{equation}
  \label{eq:choice}
  \const{L} = \left\lceil\frac{\log\Bigl(\SNR\frac{\varrho}{1-\varrho}\Bigr)}{\log\frac{1}{\varrho}}\right\rceil
\end{equation}
(where $\lceil a \rceil$ denotes the smallest integer that is greater
than or equal to $a$). We shall argue next that this choice also satisfies
$\eqref{eq:L}$. Indeed, we have by \eqref{eq:choice} that $\const{L}$
tends to infinity as $\SNR\to\infty$, which implies that, for
sufficiently large $\SNR$, $\const{L}$ is greater than $\ell_0$. It
follows then from \eqref{eq:sum} and \eqref{eq:otherL} that
\begin{equation}
  \sum_{\ell=\const{L}+1}^{\infty} \alpha_{\ell}\cdot\const{P} <
  \varrho^{\const{L}}\frac{\varrho}{1-\varrho}\const{P} \leq \sigma^2.
\end{equation}

We continue by evaluating the RHS of \eqref{eq:lowerbound} for our
choice of $\const{L}$ \eqref{eq:choice} and for $\tau=\const{L}$
\begin{IEEEeqnarray}{lCl}
  C(\SNR) & \geq & \frac{\tau}{\const{L}+\tau}
  \log\log\const{P}^{1/\tau}+\frac{\tau}{\const{L}+\tau}\Upsilon\nonumber\\
  & = & \frac{1}{2} \log\biggl(\frac{\log\const{P}}{\const{L}}\biggr)+\frac{1}{2}\Upsilon.
\end{IEEEeqnarray}
Taking the limit as $\SNR$ tends to infinity yields
\begin{IEEEeqnarray}{lCl}
  \IEEEeqnarraymulticol{3}{l}{\lim_{\SNR\to\infty} C(\SNR)}\nonumber\\
  \qquad & \geq & \lim_{\SNR\to\infty}
  \frac{1}{2}
  \log\Biggl(\frac{\log(\SNR\cdot\sigma^2)}{\frac{\log(\SNR\cdot\varrho/(1-\varrho))}{\log(1/\varrho)}}\Biggr)
  +\frac{1}{2}\Upsilon\nonumber\\
  & = & \frac{1}{2}\log\log\frac{1}{\varrho} + \frac{1}{2}\Upsilon.
\end{IEEEeqnarray}
As this can be made arbitrarily large by choosing $\varrho$ sufficiently
small, we conclude that
\begin{equation*}
  \lim_{\ell\to\infty}\frac{1}{\ell}\log\frac{1}{\alpha_{\ell}} = \infty
\end{equation*}
implies that $C(\SNR)$ is unbounded in SNR.

\section{Summary}
\label{sec:summary}
We studied the capacity of discrete-time, noncoherent, multipath fading
channels. It was shown that if the variances of the path gains decay
faster than exponentially, then capacity is unbounded in the
SNR. This complements previous results obtained in
\cite{kochlapidoth08_1} and \cite{kochlapidoth08_2}.

The overall picture looks as follows: 
\begin{itemize}
\item If the number of paths is
infinite in the sense that the channel output is influenced by the
present and by \emph{all} previous channel inputs,  and if the
variances of the path gains decay exponentially or slower, then
capacity is bounded even as the SNR grows without bound.
\item If the number of paths is infinite but the variances of the
path gains decay faster than exponentially, then capacity tends to
infinity as $\SNR\to\infty$.
\item If the number of paths is finite, then, irrespective of the
  number of paths, the capacity pre-loglog is $1$. Thus, in this case
  the multipath behavior has no significant effect on the high-SNR
  capacity.
\end{itemize}
\vfill\pagebreak

We thus see that the high-SNR behavior of the capacity of noncoherent
multipath fading channels depends critically on the assumed channel
model. Consequently, when studying such channels at high SNR, the
channel modeling is crucial, as slight changes
in the model might lead to completely different capacity results.

\section{Acknowledgment}
The authors wish to thank Olivier Leveque and Nihar Jindal for their
comments which were the inspiration for the proof of
Theorem~\ref{thm:main}.

\end{document}